\DeclareMathOperator{\supr}{sup}
\newtheoremstyle{Amin}
  {3pt}
  {3pt}
  {}
  {}
  {\bfseries}
  {:}
  {.5em}
  {}
\theoremstyle{Amin}
\newtheorem{remark}{Remark}
\newtheorem{defin}{Definition}
\newtheorem{prop}{Proposition}
\newtheorem{theorem}{Theorem}
\begin{document}
\setcounter{page}{1}
\title{Semi-Quantitative Group Testing}

\author{\IEEEauthorblockN{Amin Emad and  Olgica Milenkovic}
\authorblockA{University of Illinois, Urbana-Champaign, IL}
E-mail: \{emad2,milenkov\}@illinois.edu
}

\maketitle

\begin{abstract} We consider a novel group testing procedure, termed \emph{semi-quantitative group testing}, motivated by a class of problems arising in genome sequence processing. Semi-quantitative group testing (SQGT) is a non-binary pooling scheme that may be viewed as a combination of an adder model followed by a quantizer. For the new testing scheme we define the capacity and evaluate the capacity for
some special choices of parameters using information theoretic methods. We also define a new class of disjunct codes suitable for SQGT, termed \emph{SQ-disjunct} codes. 
We also provide both explicit and probabilistic code construction methods for SQGT with simple decoding algorithms.
\end{abstract}


\section {Introduction}
Group testing (GT) is a pooling scheme for identifying a number of subjects with some particular characteristic -- called ``positives'' -- among a large pool of subjects. The idea behind GT is that if the number of positives is much smaller than the number of subjects, one can reduce the number of experiments by testing adequately chosen groups of subjects rather than testing each subject individually. In its full generality, GT may be viewed as the problem of inferring the state of a system from a superposition of the state vectors of a subset of the system's elements. As such, it has found many applications in communication theory, signal processing, computer science, mathematics, biology, etc. (for example see~\cite{DH00}-\cite{FDH95}). 

Although many models have been considered for combinatorial GT,  two main models include the original model considered by Dorfman~\cite{D43} (henceforth, conventional GT) and the adder model (also known as the adder channel or quantitative GT)~\cite{DH00}. In the former case, the result of a test is an indicator determining if there exist at least one positive in the test (equal to $0$ if no positive in the test, and $1$ otherwise), while in the latter case, the result of a test specifies the exact number of positives in that test. Motivated by applications in genome sequence processing, we propose a novel non-adaptive test model termed semi-quantitative group testing (SQGT). This model accounts for the fact that in most applications a test is not precise enough to exactly determine the number of positives, but it is more informative than a simple indicator of the presence of at least one positive. In other words, schemes in which results are obtained using a test device with \emph{limited} precision may be modeled as instances of SQGT\footnote{One may view the SQGT scheme as a generalization of thresholded group testing to multiple thresholds and zero gaps~\cite{D06}.}. 

We also allow for the possibility of having different amounts of sample material for different test subjects, which results in non-binary test matrices. Although binary testing is required for some applications -- such as coin weighing -- in other applications, such as conflict resolution in multiple access channel (MAC) and genotyping, non-binary tests may be used to further reduce the number of tests. In the former example, different non-binary values in a test correspond to different power levels of the users, while in the latter example, they correspond to different amounts of genetic material of different subjects. The reason that non-binary tests are extremely important is that in applications like genotyping, tests are very expensive so that one may be inclined to reduce the number of tests at the expense of extracting more genetic material. While there exists information theoretic analysis that is applicable to the non-binary test matrices~\cite[Ch. 6]{D04}, to the best of the authors' knowledge, the only attempts of non-binary code construction relevant to group testing is limited to a handful of papers, including~\cite{J95} and~\cite{CW99}, where constructions are considered for an \emph{adder} MAC channel.

For the new and versatile model of SQGT with $Q$-ary test results and $q$-ary test sample sizes, $Q,q \geq 2$, we define the concept of capacity. Furthermore, we define a 
new generalization of the family of \emph{disjunct} codes, first introduced in~\cite{KS64}, called ``SQ-disjunct'' codes. Similar to the family of disjunct codes, this new code family 
may be decoded using a simple, low-complexity algorithm. We conclude our exposition with a probabilistic method for code construction, of use in applications where the physics of the experiments prohibits structured codes.

The paper is organized as follows. Section~\ref{sec:model} describes the SQGT model, while Section~\ref{sec:informationtheory} introduces the capacity of SQGT. 
In Section~\ref{sec:disjunct}, we define SQ-disjunct codes and present some simple properties of these codes. In Section~\ref{sec:construction}, we describe a number of constructions for SQGT codes.

\section{Semi-quantitative Group Testing}\label{sec:model}

Throughout this paper, we adopt the following notation. Bold-face upper-case and bold-face lower-case letters denote matrices and vectors, respectively. Calligraphic letters are used to denote sets. Asymptotic symbols such as $\sim$ and $o(\cdot)$ are used in a standard manner. For an integer $k$, we define $[k]:=\{0,1,\cdots,k-1\}$. 

Let $N$ denote the number of test subjects, and let $m$ denote the number of positives. Also, let $u$ denote an upper bound on the number of positives (i.e. $m\leq u$). 
Let $S_i$ denote the $i^{\textnormal{th}}$ subject, $i\in\{1,2,\cdots,N\}$, and let $S_{i_j}=D_j$ be the $j^{\textnormal{th}}$ positive, $j\in\{1,2,\cdots,m\}$. Furthermore, let $\mathcal{D}$ denote the set of positives, so that 
$|\mathcal{D}|=m$. We assign to each subject a unique $q$-ary vector of length $n$, termed the ``signature'' or the codeword of the subject. Each coordinate of the signature corresponds to a test. 
If $\textbf{x}_i\in[q]^n$ denotes the signature of the $i^{\textnormal{th}}$ subject, then the $k^{\textnormal{th}}$ coordinate of $\textbf{x}_i$ may be viewed as the ``amount'' of $S_i$ (i.e. sample size, concentration, etc.) 
used in the $k^{\textnormal{th}}$ test. Note that the symbol $0$ indicates that $S_i$ is not in the test. For convenience, we refer to the collection of codewords arranged column-wise as the \emph{test matrix} or \emph{code}.

The result of each test is an integer from the set $[Q]$. Each test outcome depends on the number of positives and their sample amount in the test through $Q$ thresholds, $\eta_l$ ($l\in\{1,2,\cdots,Q\}$). 
More precisely, the outcome of the $k^{\textnormal{th}}$ test, $y_k$, equals
\vspace{-0.09cm}\begin{equation}\label{model1}
y_k=r\ \ \ \ \ \textnormal{if}\ \ \ \eta_r\leq\sum_{j=1}^mx_{k,i_j}< \eta_{r+1},
\vspace{-0.09cm}\end{equation}
where $x_{k,i_j}$ is the $k^{\textnormal{th}}$ coordinate of $\textbf{x}_{i_j}$, and $\eta_0=0$.
Based on the definition, it is clear that SQGT may be viewed as a concatenation of an adder channel and a decimator (quantizer). 
Also, if $q=Q=2$ and $\eta_1=1$, the SQGT model reduces to conventional GT. Furthermore, if $Q-1=m(q-1)$ and $\forall r\in[Q]$, $\eta_r=r$, then the 
SQGT reduces to the adder channel, with a possibly non-binary test matrix. Note that in this model, we assume that $\eta_Q>(q-1)u$.

Of special interest is SQGT with a uniform quantizer - i.e. SQGT with equidistant thresholds. In this case, $\eta_r=r\eta$, where $r\in[Q+1]$, 
and the following definition may be used to simplify~\eqref{model1}.

\begin{defin}
The ``SQ-sum'' of $s\geq 1$ codewords $\mathbf{x}_j\in [q]^n$, $1\leq j\leq s$, 
denoted by $\mathbf{y}=\bigoasterisk_{j=1}^{s}\mathbf{x}_j=\mathbf{x}_1\oasterisk\mathbf{x}_2\oasterisk\cdots\oasterisk\mathbf{x}_s$, is a 
vector of length $n$ with its $i^{\textnormal{th}}$ coordinate equal to 
\vspace{-0.09cm}\begin{equation}
y_i=\left\lfloor\frac{x_{i,1}+x_{i,2}+\cdots+x_{i,s}}{\eta}\right\rfloor.
\vspace{-0.09cm}\end{equation}
Here, ``$+$'' stands for real-valued addition.
\end{defin}
Using this definition,~\eqref{model1} reduces to
\vspace{-0.09cm}\begin{equation}
\mathbf{y}=\bigoasterisk_{j=1}^{m}\mathbf{x}_{i_j}.
\vspace{-0.09cm}\end{equation}
where $\mathbf{x}_{i_j}$ is the signature of the $j^{\textnormal{th}}$ positive.

\section{Capacity of SQGT}\label{sec:informationtheory}
It is well-known that group testing may be viewed as a special instance of a multiple access channel (MAC) (e.g. see~\cite{D04}). Using this connection, Malyutov~\cite{M80}, D'yachkov~\cite{D04}, and Atia et al.~\cite{AS10} derived information theoretic necessary and sufficient conditions on the required number of tests for conventional GT. It is tedious, yet straightforward, to show that the model described in \cite{M80,D04,AS10} may also 
be used to evaluate SQGT schemes. The main difference in the analysis of GT and SQGT arises due to the different forms of the mutual information used to express the necessary and sufficient conditions. 
We therefore focus on characterizing the mutual informations arising in the SQGT framework. Our notation follows the setup of~\cite{AS10}.

Let the sample amount of each subject in each test be chosen in an i.i.d manner from a $q$-ary alphabet, according to a distribution $P_T$. 
Also, let $\mathcal{D}^{\{i\}}_1$ and $\mathcal{D}^{\{i\}}_2$ be disjoint partitions of the set of positives, $\mathcal{D}$, such that $|\mathcal{D}^{\{i\}}_1|=i$ and $|\mathcal{D}^{\{i\}}_2|=m-i$; we denote by $\mathcal{A}_{\mathcal{D}}^{\{i\}}$ the set of all possible pairs $(\mathcal{D}^{\{i\}}_1,\mathcal{D}^{\{i\}}_2)$. For a single test,  we define $y$ as the test result, and $\textbf{t}_{\mathcal{D}_j}^{\{i\}}$ (where $j=1,2$) as a vector of size $1\times |\mathcal{D}^{\{i\}}_j|$, with its $k^{\textnormal{th}}$ entry equal to the sample amount of the $k^{\textnormal{th}}$ positive of $\mathcal{D}^{\{i\}}_j$ in the test. Fig. 1 shows a choice of $(\mathcal{D}^{\{2\}}_1,\mathcal{D}^{\{2\}}_2)$ and their corresponding vectors $\textbf{t}_{\mathcal{D}_1}^{\{2\}}$ and $\textbf{t}_{\mathcal{D}_2}^{\{2\}}$ for the case where $m=5$ and $q=2$. 

\begin{figure}
\centering
\subfigure[][]{
\centering
\includegraphics[width=0.18\textwidth]{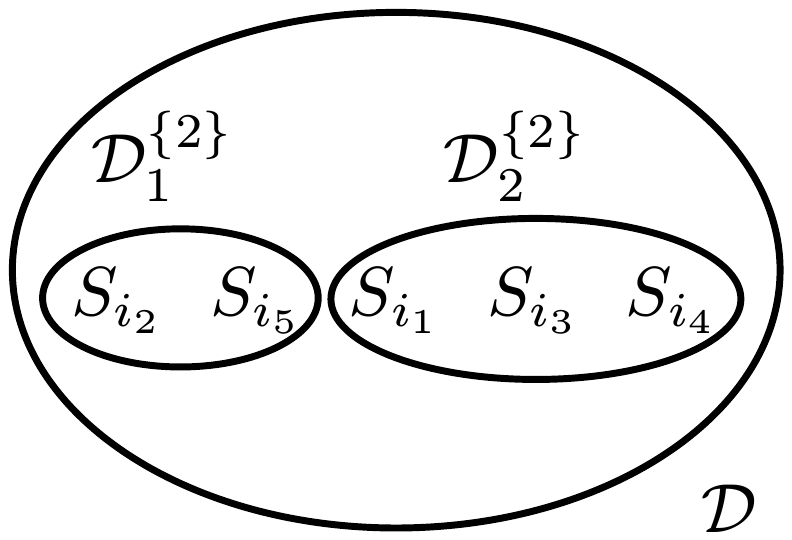}
\label{subfig:partition}}
\subfigure[][]{
\centering
\includegraphics[width=0.15\textwidth]{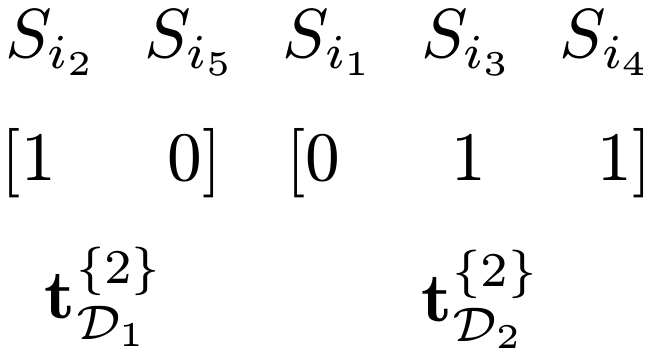}
\label{subfig:partition_vec}}
\vspace{-0.09cm}
\caption{One choice of $(\mathcal{D}^{\{2\}}_1,\mathcal{D}^{\{2\}}_2)$ and their corresponding $\textbf{t}_{\mathcal{D}_1}^{\{2\}}$ and $\textbf{t}_{\mathcal{D}_2}^{\{2\}}$ in a binary test design for $m=5$. }
\label{fig:partition}
\vspace{-0.3cm}
\end{figure}

By following the same steps as in~\cite{AS10}, it can be shown that for any fixed $m$, if the number of tests $n$ satisfies
\vspace{-0.09cm}\begin{equation}\label{sufficient}
n>\max_{i:(\mathcal{D}^{\{i\}}_1,\mathcal{D}^{\{i\}}_2)\in\mathcal{A}_{\mathcal{D}}^{\{i\}}}\frac{\log{{N-m}\choose i}{m\choose i}}{I(\textbf{t}_{\mathcal{D}_1}^{\{i\}};\textbf{t}_{\mathcal{D}_2}^{\{i\}},y)}\ \ \ \ i=1,2,\cdots,m,
\vspace{-0.09cm}\end{equation}
then the average probability of error asymptotically approaches zero\footnote{A sufficient condition for zero error probability can also be found based on the exponential asymptotics of the average error probability. Such calculations are presented in~\cite{D04} for the particular case of conventional GT.  }. In this equation, $I(\textbf{t}_{\mathcal{D}_1}^{\{i\}};\textbf{t}_{\mathcal{D}_2}^{\{i\}},y)$ stands for the mutual information between $\textbf{t}_{\mathcal{D}_1}^{\{i\}}$ and $(\textbf{t}_{\mathcal{D}_2}^{\{i\}},y)$. Note that since the sample amounts of the subjects are chosen independently and identically, the value of $I(\textbf{t}_{\mathcal{D}_1}^{\{i\}};\textbf{t}_{\mathcal{D}_2}^{\{i\}},y)$ does not depend on the specific choice of $(\mathcal{D}^{\{i\}}_1,\mathcal{D}^{\{i\}}_2)$. Similarly, a necessary condition for zero average error probability for SQGT is
\vspace{-0.09cm}\begin{equation}\label{necessary}
n\geq\max_{i:(\mathcal{D}^{\{i\}}_1,\mathcal{D}^{\{i\}}_2)\in\mathcal{A}_{\mathcal{D}}^{\{i\}}}\frac{\log{{N-m+i}\choose i}}{I(\textbf{t}_{\mathcal{D}_1}^{\{i\}};\textbf{t}_{\mathcal{D}_2}^{\{i\}},y)}\ \ \ \ i=1,2,\cdots,m.
\vspace{-0.09cm}\end{equation}

\begin{defin}\label{def:capacity}[Asymptotic capacity of SQGT channel]
Using \eqref{sufficient} and \eqref{necessary}, we define the \emph{asymptotic capacity} of the channel corresponding to the SQGT scheme (henceforth, SQGT channel) as
\vspace{-0.09cm}\begin{equation}\label{eq:capacity}
C=\supr_{P_T,\boldsymbol{\eta}}{\alpha(m,P_T,\boldsymbol{\eta})},
\vspace{-0.09cm}\end{equation}
where $\alpha(m,P_T,\boldsymbol{\eta})=\min_{i=1,2,\cdots,m}\frac{I(\textbf{t}_{\mathcal{D}_1}^{\{i\}};\textbf{t}_{\mathcal{D}_2}^{\{i\}},y)}{i}$, and where $\boldsymbol{\eta}$ is a vector of length $Q$ with $\eta_k$ its $k^{\textnormal{th}}$ entry.
\end{defin}

In certain applications, $\boldsymbol{\eta}$ may be determined a priori by the resolution of the test equipment. In such applications, the only design parameter to optimize is $P_T$.
On the other hand, if one is able to control the thresholds, $\boldsymbol{\eta}$ becomes a design parameter and clearly exhibits a strong influence on the capacity of the test scheme.
 
Define the rate of a group test as $R=\frac{\log N}{n}$. The next theorem clarifies the use of the term ``capacity'' in Definition~\ref{def:capacity}. 
\begin{theorem}
For SQGT, $C=\supr_{P_T,\boldsymbol{\eta}}{I(\textbf{t}_{\mathcal{D}_1}^{\{m\}};\textbf{t}_{\mathcal{D}_2}^{\{m\}},y)}/{m}$, and all rates bellow capacity are achievable. In other words, for every rate $R<C$, there exists a test design for which the average probability of error converges to zero. Conversely, any test design with zero achieving average probability of error must asymptotically satisfy $R<C$.
\end{theorem}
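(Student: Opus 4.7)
The plan is to break the theorem into three parts: (i) reduce the minimum over $i$ in the definition of $\alpha(m,P_T,\boldsymbol{\eta})$ so that it is attained at $i=m$, thereby yielding the stated form of $C$; (ii) establish achievability using the sufficient condition~\eqref{sufficient}; and (iii) establish the converse using the necessary condition~\eqref{necessary}.

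For (i), I first note that since the signature symbols are chosen i.i.d.\ across subjects, $\textbf{t}_{\mathcal{D}_1}^{\{i\}}$ and $\textbf{t}_{\mathcal{D}_2}^{\{i\}}$ are independent, so the chain rule gives $I(\textbf{t}_{\mathcal{D}_1}^{\{i\}};\textbf{t}_{\mathcal{D}_2}^{\{i\}},y)=I(\textbf{t}_{\mathcal{D}_1}^{\{i\}};y\mid\textbf{t}_{\mathcal{D}_2}^{\{i\}})$. Call this common value $f(i)$; by i.i.d.\ symmetry it depends only on $i$, with $f(0)=0$. Writing the positives' single-test samples as $T_1,\dots,T_m$, another chain-rule step gives $f(i+1)-f(i)=I(T_{i+1};y\mid T_{i+2},\dots,T_m)=:h(m-i-1)$, where $h(j)$ is the mutual information between one sample and $y$ given $j$ other i.i.d.\ samples. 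Since $h(j)=H(T_1)-H(T_1\mid y,T_2,\dots,T_{j+1})$ and additional conditioning only decreases the second term, $h$ is non-decreasing in $j$; hence $f(i+1)-f(i)$ is non-increasing in $i$, so $f$ is (discretely) concave. Combined with $f(0)=0$, concavity forces $f(i)/i\ge f(m)/m$ for every $1\le i\le m$, so the minimum defining $\alpha$ is attained at $i=m$, yielding $\alpha(m,P_T,\boldsymbol{\eta})=I(\textbf{t}_{\mathcal{D}_1}^{\{m\}};\textbf{t}_{\mathcal{D}_2}^{\{m\}},y)/m$.

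For (ii) I fix any $R<C$, pick $(P_T^\star,\boldsymbol{\eta}^\star)$ with $\alpha(m,P_T^\star,\boldsymbol{\eta}^\star)>R$, and use that, for $m$ fixed and $N\to\infty$, $\log\bigl[\binom{N-m}{i}\binom{m}{i}\bigr]=i\log N(1+o(1))$ uniformly in $i\in\{1,\dots,m\}$. Substituting into~\eqref{sufficient} reduces the bound to $n>\log N(1+o(1))/\alpha(m,P_T^\star,\boldsymbol{\eta}^\star)$, which the choice $n=\lceil\log N/R\rceil$ meets for large $N$, and the average error probability vanishes. The converse~\eqref{necessary} is handled the same way using $\log\binom{N-m+i}{i}=i\log N(1+o(1))$: any vanishing-error scheme must satisfy $n\ge\log N(1+o(1))/\alpha(m,P_T,\boldsymbol{\eta})$, so $R\le\alpha(m,P_T,\boldsymbol{\eta})$ for its design parameters, and supremizing yields $R\le C$.

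The main obstacle is the concavity step in (i). Without the i.i.d.\ construction of the signatures the argument collapses: it is precisely the i.i.d.\ structure that makes $f(i)$ a function of $i$ alone (symmetry) and makes the incremental information $h(j)$ from one extra conditioning sample monotone in $j$. Everything else -- the chain-rule manipulations, the entropy-under-conditioning inequality, and the binomial asymptotics -- is routine.
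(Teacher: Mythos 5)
Your proposal is correct, but it is worth noting that the paper itself gives no argument at all for this theorem: its ``proof'' is a one-sentence deferral to the screening-design results of Malyutov and D'yachkov. You have instead supplied a self-contained derivation, and its one genuinely non-routine ingredient is the reduction of $\min_{i}I(\textbf{t}_{\mathcal{D}_1}^{\{i\}};\textbf{t}_{\mathcal{D}_2}^{\{i\}},y)/i$ to the single term $i=m$. Your route to this -- writing $f(i)=I(T_1,\dots,T_i;y\mid T_{i+1},\dots,T_m)$, showing via the chain rule that the increment $f(i+1)-f(i)=I(T_{i+1};y\mid T_{i+2},\dots,T_m)$ is non-increasing in $i$ because ``conditioning reduces entropy,'' and then invoking discrete concavity with $f(0)=0$ -- is sound, and it is exactly the structural fact that makes the capacity expression collapse to the $i=m$ term; the cited references establish the analogous fact for the conventional GT channel, so your argument is a legitimate adaptation rather than a departure. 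The achievability and converse steps via the binomial asymptotics $\log\bigl[\binom{N-m}{i}\binom{m}{i}\bigr]=i\log N(1+o(1))$ are routine and correct. The one caveat: your converse, like inequality~\eqref{necessary} itself, is stated for the i.i.d.\ random ensemble indexed by $(P_T,\boldsymbol{\eta})$, whereas the theorem claims that \emph{any} test design achieving vanishing error must satisfy $R<C$; extending the converse to arbitrary (deterministic, non-i.i.d.) test matrices is precisely the part that genuinely requires the machinery of the cited works and is not covered by supremizing over $(P_T,\boldsymbol{\eta})$ alone. If you flag that restriction explicitly, your proof is a strict improvement in completeness over what the paper provides.
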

\begin{proof}
Follows from simple modifications of arguments in~\cite{M80},~\cite{D04}. 
\end{proof}


The mutual information $I(\textbf{t}_{\mathcal{D}_1}^{\{m\}};\textbf{t}_{\mathcal{D}_2}^{\{m\}},y)$ in this theorem may be evaluated as follows. Let $W_1$ denote the $l_1$-norm of $\textbf{t}_{\mathcal{D}_1}^{\{m\}}$. Then,
\begin{align}\nonumber
&I_{\textnormal{SQ}}(\textbf{t}_{\mathcal{D}_1}^{\{m\}};\textbf{t}_{\mathcal{D}_2}^{\{m\}},y)=H(y|\textbf{t}_{\mathcal{D}_2}^{\{m\}})-H(y|\textbf{t}_{\mathcal{D}_1}^{\{m\}},\textbf{t}_{\mathcal{D}_2}^{\{m\}})=H(y).
\end{align}
On the other hand $\forall l\in[Q]$,
\begin{align}
P(y = l) = P(\eta_l\leq W_1 <\eta_{l+1} )=\sum_{w_1=\eta_l}^{\eta_{l\!+\!1}-1}\!P_{W_1}(w_1)
\end{align}
where $P_{W_1}(w_1)$ is the probability mass function (PMF) of $W_1$ and can be found using 
\vspace{-0.09cm}\begin{equation}
P_{W_1}(w_1)=P_T(t_1)*P_T(t_2)*\cdots*P_T(t_m),
\vspace{-0.09cm}\end{equation}
where ``$*$'' denotes convolution. 
Note that when $q=2$, 
\begin{align}
P(y = l) = \sum_{j=\eta_l}^{\eta_{l+1}-1}{m\choose j}p^j(1\!-\!p)^{m-j}
\end{align} 
where $p$ is the probability that a subject is present in a test.

Due to the complicated expression for the mutual information for an arbitrary distribution, a closed-form expression for the test capacity cannot be obtained. We therefore evaluated~\eqref{eq:capacity} numerically using a simple search procedure that allows us to quickly determine 
a lower bound on the capacity. Fig. 2 shows the obtained lower bound on the capacity when $q=3$, and $Q=2$ or $Q=3$. Table~\ref{table:capacity} shows one set of probability distributions 
and thresholds achieving this bound for $Q=3$. 

\begin{figure}
\centering
\includegraphics[width=0.3\textwidth]{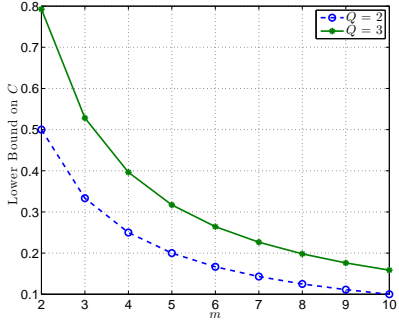}\label{fig:capacity}
\vspace{-0.09cm}
\caption{Numerically obtained lower bound for SQGT with $q=3$ for different values of $m$. }
\vspace{-0.3cm}
\end{figure}

Table~\ref{table:capacity} reveals an interesting property of the quantizers found through numerical search: there exists at least one quantization region that consists of one or two elements only.
What this finding implies is that in order to reduce the number of tests as much as possible, a sufficient amount of qualitative information has to be preserved. For example, by having a quantizer
that assigns the value $v$ only to inputs of value $v$, allows for resolving a large amount of uncertainty. Furthermore, the most informative input, left unaltered after quantization, corresponds
to a statistical average of the input symbols, reminiscent to the centroid of a quantization region. These findings will be discussed in more detail in the full version of the paper.
\begin{table}[t!]
	\centering
	\caption{A set of probability distributions and thresholds corresponding to $Q=3$ in Fig. 2.}
		\begin{tabular}{|c||c|c|}
			\hline 
			$m$ &  $P_T$ & quantizer\\ 
			\hline\hline
			
			$2$ & $[0.33\  0.34 \  0.33]$ & $\{0,1\}\{2\}\{3,4\}$\\
			\hline
			
			$3$ & $[0.43\  0.46\  0.11]$ & $\{0,1\}\{2\}\{3,4,5,6\}$\\			
			
			\hline
			
			$4$ & $[0.18\  0.64\  0.18]$ & $\{0,1,2,3\}\{4\}\{5,6,7,8\}$ \\
			
			\hline
			
			$5$ & $[0.15\  0.70\  0.15]$ & $\{0,1,2,3,4\}\{5\}\{6,7,8,9,10\}$\\
			
			\hline
			
			$6$ & $[0.46\  0.15\  0.39]$&  $\{0,1,2,3,4\}\{5,6\}\{7,8,\cdots,12\}$\\
			
			\hline
			
			$7$ & $[0.34\  0.25\  0.41]$&  $\{0,1,\cdots,6\}\{7,8\}\{9,10,\cdots,14\}$\\	
			
			\hline
			
			$8$ & $[0.10\  0.80\  0.10]$&  $\{0,1,\cdots,7\}\{8\}\{9,10,\cdots,16\}$ \\	
			
			\hline
			
			$9$ & $[0.09\  0.82\  0.09]$ & $\{0,1,\cdots,8\}\{9\}\{10,11,\cdots,18\}$ \\							
			\hline
			
			$10$ & $[0.58\  0.28\  0.14]$ & $\{0,1,\cdots,4\}\{5,6\}\{7,8,\cdots,20\}$ \\	
			\hline					

			\end{tabular}\label{table:capacity}
			\vspace*{-10pt}	
\end{table}

\section{Generalized Disjunct and Separable Codes for SQGT}\label{sec:disjunct}
Disjunct codes were first introduced in~\cite{KS64} for efficient zero-error group testing reconstruction. In what follows, we define a new family of disjunct codes suitable for SQGT that shares many of the properties of binary disjunct codes. 

\begin{defin}
The \emph{syndrome} of a set of vectors $\{\mathbf{x}_i\}$, $i\in\{1,2,\cdots,s\}$, such that $\mathbf{x}_i \in[q]^n$, is a vector $\mathbf{y} \in [Q]^n$ equal to $\mathbf{y}=\bigoasterisk_{j=1}^s\mathbf{x}_j$.
\end{defin}

\begin{defin}
A set of codewords $\mathcal{X}=\{\mathbf{x}_1,\mathbf{x}_2,\cdots,\mathbf{x}_s\}$ with syndrome $\mathbf{y}_{\!_\mathcal{X}}$ is said to be \emph{included} in another set of codewords $\mathcal{Z}=\{\mathbf{z}_1,\mathbf{z}_2,\cdots,\mathbf{z}_t\}$ with syndrome $\mathbf{y}_{\!_\mathcal{Z}}$, if $\forall i\in\{1,2,\cdots,n\}$, ${{y}_{\!_\mathcal{X}}}_{\!_i}\leq {{y}_{\!_\mathcal{Z}}}_{\!_i}$. We denote this inclusion property by $\mathcal{X}\lhd\mathcal{Z}$ or equivalently $\mathbf{y}_{\!_\mathcal{X}}\lhd\mathbf{y}_{\!_\mathcal{Z}}$.  
\end{defin}
\begin{remark}
By this definition, it can be easily verified that if $\mathcal{X}\subseteq\mathcal{Z}$, then $\mathcal{X}\lhd \mathcal{Z}$.
\end{remark}

Note that for $q=2$, this definition is equivalent to the definition of inclusion for conventional GT, defined in~\cite{KS64}.

\begin{defin}
A code is called a $[q;Q;\boldsymbol{\eta};u]$-SQ-disjunct code of length $n$ and size $N$ if $\forall s,t\leq u$  and for any sets of $q$-ary codewords $\mathcal{X}=\{\mathbf{x}_1,\mathbf{x}_2,\cdots,\mathbf{x}_s\}$ and $\mathcal{Z}=\{\mathbf{z}_1,\mathbf{z}_2,\cdots,\mathbf{z}_t\}$, $\mathcal{X}\lhd \mathcal{Z}$ implies $\mathcal{X}\subseteq \mathcal{Z}$.
\end{defin}
Henceforth, we focus on the case where the thresholds are equidistant. We call such codes $[q;Q;\eta;u]$-SQ-disjunct codes.

\begin{prop}\label{SQDisj}
 A code is $[q;Q;\eta;u]$-SQ-disjunct if and only if no codeword is included in the set of $u$ other codewords. 
 \end{prop}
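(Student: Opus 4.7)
The plan is to prove the double implication by leaning on a single monotonicity observation about the SQ-sum: if $\mathcal{X} \subseteq \mathcal{X}'$, then $\mathbf{y}_{\mathcal{X}} \lhd \mathbf{y}_{\mathcal{X}'}$. This follows because each entry of a syndrome is $\lfloor \sum_j x_{j,i}/\eta \rfloor$ with all $x_{j,i} \geq 0$, so appending further codewords to the SQ-sum can only increase each coordinate. I would state this fact once at the start and invoke it in both directions.

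For the forward implication ($\Rightarrow$), I would simply specialize the defining condition of an SQ-disjunct code to $s = 1$. If some codeword $\mathbf{x}$ were included in a set $\mathcal{Z}$ of $u$ other codewords, then taking $\mathcal{X} = \{\mathbf{x}\}$ gives $\mathcal{X} \lhd \mathcal{Z}$ with $s = 1$ and $t = u$, yet $\mathcal{X} \not\subseteq \mathcal{Z}$ because $\mathbf{x} \notin \mathcal{Z}$. This directly violates SQ-disjunctness.

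For the backward implication ($\Leftarrow$), I would argue by contrapositive. Assume the code is not SQ-disjunct; then there exist $\mathcal{X}, \mathcal{Z}$ of sizes at most $u$ with $\mathcal{X} \lhd \mathcal{Z}$ but $\mathcal{X} \not\subseteq \mathcal{Z}$. Pick any $\mathbf{x} \in \mathcal{X} \setminus \mathcal{Z}$. Applying the monotonicity fact to $\{\mathbf{x}\} \subseteq \mathcal{X}$ yields $\{\mathbf{x}\} \lhd \mathcal{X}$, which combined with $\mathcal{X} \lhd \mathcal{Z}$ (coordinate-wise transitivity of $\lhd$) gives $\{\mathbf{x}\} \lhd \mathcal{Z}$. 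If $|\mathcal{Z}| < u$, I would pad $\mathcal{Z}$ with arbitrary codewords distinct from $\mathbf{x}$ until it has size exactly $u$, using monotonicity once more to preserve the inclusion. This exhibits a single codeword included in a set of $u$ other codewords, contradicting the hypothesis.

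The only subtle point is the padding step in the reverse direction: it silently requires $N \geq u + 1$, which is in any case necessary for the statement to be non-vacuous. Beyond this, the argument is essentially a careful unpacking of the definitions once SQ-sum monotonicity is made explicit, so I do not expect a substantive obstacle.
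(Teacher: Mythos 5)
Your argument is correct and follows essentially the same route as the paper: the forward direction is the $s=1$ specialization, and the backward direction reduces a general violation $\mathcal{X}\lhd\mathcal{Z}$ to a single offending codeword via $\{\mathbf{x}\}\lhd\mathcal{X}\lhd\mathcal{Z}$, with the monotonicity of the SQ-sum (the paper's Remark~1) handling the case $|\mathcal{Z}|<u$ just as your padding step does. Your version merely makes the monotonicity and padding explicit where the paper leaves them implicit.
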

\begin{proof}
It is easy to verify that if a code is $[q;Q;\eta;u]$-SQ-disjunct, then no codeword is included in the set of $u$ other codewords. 

Conversely, let $\mathcal{X}=\{\mathbf{x}_1,\mathbf{x}_2, \cdots,\mathbf{x}_s\}$ and $\mathcal{Z}=\{\mathbf{z}_1, \mathbf{z}_2 , \cdots,\mathbf{z}_t\}$ be two sets of codewords where $s,t\leq u$. From the assumption that no codeword is included in the set of $u$ other codewords, one can conclude that no codeword is included in the set of $t$ other codewords when $t\leq u$. If $\mathcal{X}\lhd\mathcal{Z}$ but $\mathcal{X}\nsubseteq\mathcal{Z}$, then there exists a codeword $\mathbf{x}_j$ ($j\in\{1,2,\cdots,s\}$) such that $\{\mathbf{x}_j\}\nsubseteq\mathcal{Z}$. But since $\{\mathbf{x}_j\}\lhd \mathcal{X}\lhd\mathcal{Z}$, then $\{\mathbf{x}_j\}\lhd\mathcal{Z}$, which contradicts the assumption that no codeword is included in $t$ other codewords.
\end{proof}

\begin{remark}
From Proposition~\ref{SQDisj}, one can conclude that a code is $[q;Q;\eta;u]$-SQ-disjunct if and only if for any set of $u+1$ codewords, $\{\mathbf{x}_1,\mathbf{x}_2,\cdots,\mathbf{x}_{u+1}\}$, there exists a unique coordinate ${k(i)}$ in each codeword $\mathbf{x}_i$, for which 
\vspace{-0.09cm}\begin{equation}\label{condition}
\left\lfloor\frac{x_{k(i),i}}{\eta}\right\rfloor>\left\lfloor\frac{\sum_{j=1,j\neq i}^{u+1}{x}_{k(i),j}}{\eta}\right\rfloor.
\vspace{-0.09cm}\end{equation}
By unique coordinate, we mean that $k(i)\neq k(j)$, if $i\neq j$. Consequently, a necessary condition for the existence of a $[q;Q;\eta;u]$-SQ-disjunct code is that $q-1\geq\eta$. 
As a result, there exist no binary $[2;Q;\eta;u]$-SQ-disjunct code when $\eta>1$.
\end{remark}

\begin{prop}\label{prop:disj}
Any code generated by multiplying a conventional binary $u$-disjunct code by $q-1$, where $q-1\geq\eta_1$, is a $[q;Q;\boldsymbol{\eta};u]$-SQ-disjunct code. 
As a result, the rate of the best $[q;Q;\boldsymbol{\eta};u]$-SQ-disjunct code is at least as large as the rate of the best binary $u$-disjunct code with the same size and length.
\end{prop}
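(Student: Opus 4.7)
The plan is to leverage the characterization in Proposition~\ref{SQDisj}, which reduces the SQ-disjunct property to the statement that no single codeword is included in the set of any $u$ other codewords. Although Proposition~\ref{SQDisj} is written for equidistant thresholds, its proof never uses equidistance, so the same argument applies with the syndrome defined via general $\boldsymbol{\eta}$ (i.e.\ the $k$-th syndrome entry is the unique $r$ with $\eta_r \leq \sum_j x_{k,j} < \eta_{r+1}$). Let $\mathcal{C}$ be a binary $u$-disjunct code of length $n$ and size $N$, and form $\mathcal{C}' := \{(q-1)\mathbf{c} : \mathbf{c} \in \mathcal{C}\}$; each entry of a codeword in $\mathcal{C}'$ lies in $\{0,q-1\} \subseteq [q]$, so $\mathcal{C}'$ is a legitimate $q$-ary code.

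Next, I would pick an arbitrary $\mathbf{x} = (q-1)\mathbf{c} \in \mathcal{C}'$ and any $u$ other codewords $\mathbf{z}_i = (q-1)\mathbf{c}_i$, and invoke the binary $u$-disjunct property of $\mathcal{C}$ to produce a coordinate $k$ at which $c_k = 1$ while $c_{i,k} = 0$ for every $i = 1,\dots,u$. After scaling, $x_k = q - 1 \geq \eta_1$, so the $k$-th entry of the syndrome of $\{\mathbf{x}\}$ is at least $1$; simultaneously $\sum_{i=1}^u z_{i,k} = 0 < \eta_1$, which forces the $k$-th syndrome entry of $\{\mathbf{z}_1,\dots,\mathbf{z}_u\}$ to be $0$. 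This strict inequality at coordinate $k$ violates the coordinate-wise inclusion condition, so $\{\mathbf{x}\} \not\lhd \{\mathbf{z}_1,\dots,\mathbf{z}_u\}$. Proposition~\ref{SQDisj} then delivers the $[q;Q;\boldsymbol{\eta};u]$-SQ-disjunct property.

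For the rate claim, multiplication by the positive integer $q-1$ is injective on $\{0,1\}^n$, so $|\mathcal{C}'| = N$ and the length is preserved; hence $\mathcal{C}'$ has the same rate $\tfrac{\log N}{n}$ as $\mathcal{C}$, which immediately gives the lower bound on the rate of the best $[q;Q;\boldsymbol{\eta};u]$-SQ-disjunct code. I do not expect a real obstacle: all the work is carried by the distinguishing coordinate from the binary $u$-disjunct property, and the hypothesis $q - 1 \geq \eta_1$ is present precisely so that a single amplified $1$-symbol already clears the lowest quantizer threshold while a sum of $u$ zeros cannot. The only technical care needed is to verify that the characterization in Proposition~\ref{SQDisj} indeed survives the passage to non-equidistant $\boldsymbol{\eta}$, which is a straightforward re-reading of its proof.
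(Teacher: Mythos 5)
Your proof is correct. The paper states Proposition~\ref{prop:disj} without any proof, and your argument --- reducing to the single-codeword characterization of Proposition~\ref{SQDisj} and then using the binary $u$-disjunct property to exhibit a coordinate where the scaled codeword contributes $q-1\geq\eta_1$ (syndrome entry $\geq 1$) while the $u$ other codewords sum to $0<\eta_1$ (syndrome entry $0$) --- is exactly the intended, natural one; your side remark that Proposition~\ref{SQDisj} carries over verbatim to non-equidistant thresholds is also sound, since its proof uses only the monotonicity of the syndrome under adding codewords and the transitivity of the coordinate-wise order.
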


Our interest in SQ-disjunct codes lies in their simple decoding procedures, of complexity $O(nN)$. However, one can construct codes for SQGT using other GT codes, such as binary $u$-separable codes for conventional 
GT~\cite{KS64} or codes designed for the adder channel~\cite{L75}. It can be shown that any of these two family of codes can be multiplied by $\eta$ to form a code for SQGT. 

\begin{remark}
The constructions described in this section reveal the following, and highly intuitive fact: the number of individuals that may be successfully examined with $Q$-ary SQGT may be as large as the number of individuals that may be tested under the adder channel model, provided that one is allowed to pool different amounts of sample material in each test. In other words, the rate of adder and SGQT channels may be the same, despite the loss of information induced by the quantizer, provided that the alphabet size of the latter scheme is sufficiently larger than the alphabet size of the former scheme.
\end{remark}

\section{Code Construction for SQGT}\label{sec:construction}

In what follows, we discuss two approaches for constructing SQGT codes. For simplicity, we focus on SQGT codes with equidistant thresholds. The first approach relies on classical combinatorial methods, while the second approach relies on probabilistic methods. The second approach is of special interest for applications such as genotyping, where one cannot arbitrarily choose the test matrices. The tests are usually determined by the physics of the experiment, and only certain statistical properties of the tests are known. In this scenario, ``structure'' is to be seen as probabilistic trait. We show that one way to approach this problem is to characterize the number of tests that ensures that \emph{almost all members of a code family} possess a given trait and act as SQGT codes.

\subsection{Combinatorial Construction}
 Fix a binary $u$-disjunct code matrix $\mathbf{C}_b$ of dimensions $n_b\times N_b$, with code-length $n_b$ and $N_b$ codewords. Let $K=\left\lfloor\log_u\left(\left(\frac{q-1}{\eta}\right)(u-1)+1\right)\right\rfloor$; construct a code of length $n=n_b$ and size $N=KN_b$ by concatenating $K$ matrices, $\mathbf{C}=[\mathbf{C}_1,\mathbf{C}_2,\cdots, \mathbf{C}_K]$, where $\mathbf{C}_j=\left(\sum_{i=0}^{j-1}u^i\eta\right)\mathbf{C}_b,$ $1\leq j\leq K$. 

\begin{theorem}
Let the concatenated code $\mathbf{C}$ be as described above. The code is capable of uniquely identifying up to $u$ positives.  
\end{theorem}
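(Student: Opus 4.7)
The plan is to exhibit an explicit two-stage decoder that uniquely recovers the set of positives from the syndrome. Let $\beta_j := \sum_{i=0}^{j-1} u^i = (u^j-1)/(u-1)$, so that every codeword of $\mathbf{C}$ is of the form $\eta\beta_j \mathbf{c}_b$ with $\mathbf{c}_b$ a column of $\mathbf{C}_b$ and $j\in\{1,\ldots,K\}$. First I would verify that the choice of $K$ forces $\eta\beta_K \leq q-1$, so the codewords legitimately live in $[q]^n$. Then, if the set of positives is $\{\eta\beta_{j_l}\mathbf{c}_{b_l}\}_{l=1}^{m}$ with $m\leq u$, the argument of the floor in~\eqref{model1} is already an integer multiple of $\eta$, so the syndrome simplifies to
\begin{equation*}
y_k \;=\; \sum_{l=1}^m \beta_{j_l}\, c_{b_l,k} \;=\; \sum_{b\in\mathcal{B}} a_b\, c_{b,k},
\end{equation*}
where $\mathcal{B}$ is the set of distinct binary columns of $\mathbf{C}_b$ appearing in the positive set and $a_b=\sum_{j\in S_b}\beta_j$, with $S_b$ the scales used with $\mathbf{c}_b$. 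In particular $|\mathcal{B}|\leq u$.

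Stage A would recover $\mathcal{B}$ together with each integer $a_b$. For every column $\mathbf{c}_b$ of $\mathbf{C}_b$ I would compute $f(\mathbf{c}_b):=\min\{y_k : c_{b,k}=1\}$. The $u$-disjunctness of $\mathbf{C}_b$, applied to $\mathbf{c}_b$ versus the at most $u$ members of $\mathcal{B}$, supplies a ``private'' coordinate at which $\mathbf{c}_b$ has a $1$ and every other member of $\mathcal{B}$ has a $0$. Evaluating $y_k$ at that coordinate yields $a_b$ when $b\in\mathcal{B}$ and $0$ otherwise, so $\mathcal{B}$ and all $a_b$'s are read off from the $f$-values.

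Stage B would invert the map $S_b\mapsto \sum_{j\in S_b}\beta_j$. The key claim is that $(\beta_j)_{j\geq 1}$ is super-increasing, i.e.\ $\beta_j>\sum_{i<j}\beta_i$ for every $j\geq 1$ and $u\geq 2$. This is a short algebraic check (the difference equals $[u^j(u-2)+1+(j-1)(u-1)]/(u-1)^2$, which is strictly positive). Super-increasing sequences have distinct subset sums, so a greedy ``peel off the largest $\beta_j\leq a_b$'' routine recovers $S_b$ unambiguously. Combining Stages A and B reconstructs the complete positive set $\{(\mathbf{c}_b,j):b\in\mathcal{B},\,j\in S_b\}$.

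The main obstacle is that one cannot simply invoke Proposition~\ref{SQDisj}: the concatenated code is \emph{not} $[q;Q;\eta;u]$-SQ-disjunct, since for any $\mathbf{c}_b$ one has $\{\eta\mathbf{c}_b\}\lhd\{\eta\beta_2\mathbf{c}_b\}$ while the two singletons are disjoint. The substance of the theorem is therefore that the quantized syndrome carries strictly more information than the $\lhd$-relation captures, and the super-increasing choice of $(\beta_j)$ is exactly what lets that extra information be extracted to resolve the scales.
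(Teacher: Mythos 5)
Your proof is correct, and it takes a genuinely different route from the paper's. The paper decodes top-down over the scale levels: it first observes (as you do) that because every entry of $\mathbf{C}$ is a multiple of $\eta$ the quantizer is lossless, and then performs a coordinate-wise division-with-remainder of the syndrome by $\beta_K,\beta_{K-1},\dots$, using the bound $u\beta_{j-1}<\beta_j$ to guarantee that the contribution of all lower levels cannot spill into level $j$; this isolates, for each $j$, the syndrome of the positives drawn from $\mathbf{C}_j$, after which Proposition~\ref{prop:disj} (each $\mathbf{C}_j$ is itself $[q;Q;\eta;u]$-SQ-disjunct) and the standard inclusion test finish the job. You instead work column-by-column in the base code: the private coordinate furnished by $u$-disjunctness lets you read off the exact aggregate coefficient $a_b$ of each binary column as a minimum over its support (a quantitative upgrade of the usual cover decoder), and you then split $a_b$ into scales via the super-increasing property $\beta_j>\sum_{i<j}\beta_i$, which is a strictly weaker requirement than the paper's $\beta_j>u\beta_{j-1}$. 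Your closing observation that the concatenated code is \emph{not} itself SQ-disjunct (e.g.\ $\{\eta\mathbf{c}_b\}\lhd\{\eta\beta_2\mathbf{c}_b\}$) is accurate and worth making explicit; the paper sidesteps this by only ever applying disjunctness within a single level. The only things you leave implicit are the trivial $u=1$ case (where your formula $\beta_j=(u^j-1)/(u-1)$ and the paper's $\log_u$ both degenerate, and the paper treats it separately) and the fact that a $u$-disjunct matrix has no zero column, so your minimum is over a nonempty set; neither is a real gap.
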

\begin{proof}
The proof is based on exhibiting a decoding procedure and showing that the procedure allows for distinguishing between any two different sets of positives. The decoder is described below.

Let $\mathbf{y}$ be the $Q$-ary vector of test outcomes, or equivalently, the syndrome of the positives. For a rational vector $\mathbf{z}$, let $\left\lfloor\mathbf{z}\right\rfloor$ and $\left\langle\mathbf{z}\right\rangle$ denote the vector of integer parts of $\mathbf{z}$ and fractional parts of $\mathbf{z}$, respectively. If $u=1$, decoding reduces to finding the column of $\mathbf{C}$ equal to $\eta \mathbf{y}$. 
 If $u>1$, decoding proceeds as follows.

\textbf{Step 1:} Set $\mathbf{y}'_{K}=\mathbf{y}$ and form vectors $\mathbf{y}_j$, $1\leq j\leq K$, using the rules:
\vspace{-0.09cm}\begin{equation}
\mathbf{y}_j=\left(\frac{u^j-1}{u-1}\right)\left\lfloor\left(\frac{u-1}{u^j-1}\right)\mathbf{y}'_{j}\right\rfloor,
\vspace{-0.09cm}\end{equation}
and
\vspace{-0.09cm}\begin{equation}
\mathbf{y}'_{j-1}=\left(\frac{u^j-1}{u-1}\right)\left\langle\left(\frac{u-1}{u^j-1}\right)\mathbf{y}'_{j}\right\rangle.
\vspace{-0.09cm}\end{equation}

\textbf{Step 2:} Identify the positives as follows: if the syndrome of a column of $\mathbf{C}_j$ is included in $\mathbf{y}_j$, declare the subject corresponding to that column positive. Declare the
subject negative otherwise. 

The result is obviously true for $u=1$. Therefore, we focus on the case $u>1$. First, using induction, one can prove that each $\mathbf{y}_j$, $1\leq j\leq K$, is the syndrome of a subset of columns of $\mathbf{C}_j$ corresponding to positives. Let $\mathbf{C}'_j=[\mathbf{C}_1,\mathbf{C}_2,\cdots, \mathbf{C}_j]$, where $1\leq j\leq K$. Since the non-zero entries of $\mathbf{C}$ are multiples of $\eta$, $\eta\mathbf{y}$ is the sum of columns of $\mathbf{C}$ corresponding to a subset of positives. Also, the maximum value of the entries of $\mathbf{C}'_{K-1}$ equals $\eta\frac{u^{K-1}-1}{u-1}$. Since there are at most $u$ positives, the maximum value of their sum does not exceed $\eta\frac{u^K-u}{u-1}$. This bound is strictly smaller than $\eta\frac{u^K-1}{u-1}$, the minimum non-zero entry of $\mathbf{C}_K$. As a result, $\mathbf{y}_K$ is the syndrome of the positives with signatures in $\mathbf{C}_K$, and $\mathbf{y}'_{K-1}$ is the syndrome of positives with signatures in $\mathbf{C}'_{K-1}$. Similarly, it can be shown that $\forall j, 1\leq j\leq K-1$, $\mathbf{y}_j$ is the syndrome of the positives with signature in $\mathbf{C}_j$, 
and $\mathbf{y}'_{j-1}$ is the syndrome of the positives with signatures in $\mathbf{C}'_{j-1}$.

From Proposition~\ref{prop:disj}, we know that each $\mathbf{C}_j$ is a $[q;Q;\eta;u]$-SQ-disjunct code. Consequently using step 2,  one can uniquely identify the positives with signatures from $\mathbf{C}_j$.
\end{proof}
%

 \begin{remark}
The method described above can be used with any binary separable code for conventional GT or adder channel to generate a SQGT code.
 \end{remark}
 
 \begin{remark}
All the constructions described in this paper are able to identify up to $u$ positives in a  pool of $N$ subjects when $u\ll N$. However, when $0\leq u\leq N$, one can construct non-binary codes with length $n$ and asymptotic size of $N\sim(\lfloor\log\lfloor\frac{q-1}{\eta}\rfloor\rfloor+\log n/2)n$ (see the full version of this paper).
\end{remark}
 
\subsection{Probabilistic Construction}
We consider the following problem: find a critical rate such that \emph{any} randomly generated $q$-ary code with rate less than the critical rate is a $[q;Q;\eta;u]$-SQ-disjunct code \emph{with probability close to one}. 
Based on the critical rate, which depends on the statistical properties of the process used to generate the codes, one can identify the smallest number of tests required to ensure that any code in the family
may be used for SGGT.

\begin{theorem}
Let $R_{\textnormal{critical}}=\frac{\log \gamma}{u+1}+\frac{\log(\epsilon u!)}{n(u+1)}$, $\forall\epsilon>0$, where $\gamma=q^{(u+1)}/A$, $A=\eta{(I-1)\eta+u\choose u+1}-\eta{\eta+u\choose u+1}+(q-I\eta){I\eta+u-1\choose u}$, and $I=\lfloor \frac{q-1}{\eta}\rfloor$. Any $q$-ary code of length $n$ and size $N$ with rate asymptotically satisfying $R\leq R_{\textnormal{critical}}$ is a $[q;Q;\eta;u]$-SQ-disjunct code with probability at least $1-\epsilon$.
\end{theorem}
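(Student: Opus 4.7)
The plan is to use the probabilistic method with a single union bound. Draw every entry of the $n\times N$ code matrix independently and uniformly from $[q]$. By Proposition~\ref{SQDisj}, such a random code fails to be $[q;Q;\eta;u]$-SQ-disjunct if and only if there exist a codeword $\mathbf{x}$ and a size-$u$ subset $\mathcal{Z}$ of the remaining codewords with $\{\mathbf{x}\}\lhd\mathcal{Z}$, i.e.\
\[
\left\lfloor x_k/\eta\right\rfloor\;\le\;\left\lfloor \sum_{\mathbf{z}\in\mathcal{Z}}z_k/\eta\right\rfloor\qquad\text{for every }k\in\{1,\dots,n\}.
\]
Because the matrix entries are i.i.d.\ across coordinates, for a fixed pair $(\mathbf{x},\mathcal{Z})$ this intersection has probability $p^n$, where
\[
p\;=\;\Pr\!\left[\lfloor X/\eta\rfloor\;\le\;\lfloor (Z_1+\cdots+Z_u)/\eta\rfloor\right]
\]
and $X,Z_1,\dots,Z_u$ are i.i.d.\ uniform on $[q]$.

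The core step is the combinatorial identity $q^{u+1}p=A$. I would enumerate favorable tuples $(x,z_1,\dots,z_u)\in[q]^{u+1}$ by the value of $k=\lfloor x/\eta\rfloor\in\{0,1,\dots,I\}$: there are $\eta$ admissible $x$ when $k<I$ and $q-I\eta$ admissible $x$ when $k=I$, after which the constraint on the $z$-coordinates reduces to $z_1+\cdots+z_u\ge k\eta$. Since $k\eta-1<q$ throughout this range, the complementary count of $z$-tuples with $z_1+\cdots+z_u\le k\eta-1$ is free of upper-bound corrections and equals a plain stars-and-bars binomial. Summing the contributions over $k$, applying the hockey-stick identity to the resulting arithmetic-progression binomial sum, and isolating the asymmetric boundary bin at $k=I$ (which is responsible for the last term of $A$ whenever $q-1$ is not a multiple of $\eta$) collapses the total into the three-term expression that defines $A$.

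With $p=A/q^{u+1}=\gamma^{-1}$ in hand, the union bound over the $N\binom{N-1}{u}\le N^{u+1}/u!$ choices of $(\mathbf{x},\mathcal{Z})$ yields
\[
\Pr[\text{code is not SQ-disjunct}]\;\le\;\frac{N^{u+1}}{u!}\,\gamma^{-n}.
\]
Requiring this upper bound to be at most $\epsilon$, taking logarithms, and dividing through by $n(u+1)$ produces exactly $R\le\frac{\log\gamma}{u+1}+\frac{\log(\epsilon u!)}{n(u+1)}=R_{\textnormal{critical}}$, which completes the argument. The main obstacle is the combinatorial derivation of $A$: the inequality $\lfloor X/\eta\rfloor\le\lfloor S/\eta\rfloor$ is nonlinear in its arguments, so arriving at a clean closed form requires the case-analysis on $\lfloor x/\eta\rfloor$ and the telescoping described above. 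The remaining ingredients — coordinatewise independence and a single union bound — are routine.
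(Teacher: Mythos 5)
Your overall architecture matches the paper's: reduce, via Proposition~\ref{SQDisj}, to the event that some codeword is covered by $u$ others, exploit coordinatewise independence to get a per-pair failure probability $p^n$, and finish with a union bound over the $N\binom{N-1}{u}\le N^{u+1}/u!$ pairs. Your version is in fact a little cleaner than the paper's, which detours through counting $n\times(u+1)$ matrices with distinct columns before arriving at the same asymptotic bound $(u+1)A^n/q^{n(u+1)}$ per subset. The final algebra converting $N^{u+1}p^n/u!\le\epsilon$ into $R\le R_{\textnormal{critical}}$ is also correct, \emph{provided} that $\gamma^{-1}=p$.

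That proviso is where the proposal breaks: the identity $q^{u+1}p=A$ that you assert as the ``core step'' is false for the $A$ defined in the statement, and the enumeration you yourself describe shows it. For fixed $k=\lfloor x/\eta\rfloor$ there are $\eta$ (resp.\ $q-I\eta$) admissible $x$, and the number of $z$-tuples violating $z_1+\cdots+z_u\ge k\eta$ is the stars-and-bars count $\binom{k\eta+u-1}{u}$; summing the \emph{favorable} counts therefore yields $q^{u+1}-\eta\sum_{k=1}^{I-1}\binom{k\eta+u-1}{u}-(q-I\eta)\binom{I\eta+u-1}{u}$, which is $q^{u+1}-A$, not $A$. In other words, the stated $A$ counts the tuples with $\lfloor x/\eta\rfloor>\lfloor(z_1+\cdots+z_u)/\eta\rfloor$ (the distinguished codeword sticks out), so $p=1-A/q^{u+1}$. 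A one-line check: for $q=2$, $\eta=1$, $u=1$ one has $p=\Pr[X\le Z]=3/4$, hence $q^{u+1}p=3$, while $A=1$; your route would then give the critical rate $\tfrac{1}{2}\log\tfrac{4}{3}$, whereas $\tfrac{1}{2}\log(q^{u+1}/A)=\tfrac{1}{2}\log 4$ is plainly unachievable for binary $1$-disjunct codes under a pairwise union bound. So as written the argument does not establish the statement; you would need $\gamma=q^{u+1}/(q^{u+1}-A)$, equivalently a redefinition of $A$ as the complementary count. In fairness, the paper's own proof contains the same inversion --- it defines $A$ verbally as the number of rows in which the first column is covered, but computes the number of rows in which it sticks out, and the final formula inherits the error --- so you have faithfully reproduced the bug by asserting, without verification, exactly the identity the stated formula requires.
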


\begin{proof}
Let $\mathcal{C}$ be a code of length $n$ and size $N$, and let $\mathcal{M}$ be a set of $u+1$ codewords of $\mathcal{C}$. There are $L=\binom{N}{u+1}$ different ways to choose $\mathcal{M}$. For the $i^{\textnormal{th}}$ choice of $\mathcal{M}$, we define $E_{i}$ as the event that the syndrome of at least one of the codewords in $\mathcal{M}$ is included in the syndrome of the other $u$ codewords. Suppose that $P(E_i)\leq p'$ for all $i$; using the union bound, $P\left(\bigcup_{i=1}^{L}E_i\right)\leq Lp'$. Therefore, if 
\vspace{-0.09cm}\begin{equation}\label{lovasz-high}
p'\leq\frac{\epsilon}{L},
\vspace{-0.09cm}\end{equation}
then $P\left(\bigcap_{i=1}^{L}\bar{E}_{i}\right)\geq1-\epsilon$, where $\bar{E}_{i}$ is the complement of the event $E_i$. In other words, $\mathcal{C}$ is a  $[q;Q;\eta;u]$-SQ-disjunct code with probability at least $1-\epsilon$.

From the definition of $E_i$, one has $P(E_i)=\frac{a}{(u+1)!{q^n\choose u+1}}$, where $a$ is the number of $q$-ary matrices of size $n\times(u+1)$ that do not satisfy~\eqref{condition} and have distinct columns; also, $(u+1)!{q^n\choose u+1}$ is the total number of $n\times(u+1)$ matrices with distinct columns. In order to find an upper bound on $a$, we use the fact that a matrix that satisfies~\eqref{condition} has distinct columns. Consequently, 
\vspace{-0.09cm}\begin{equation}
(u+1)!{q^n\choose u+1}-a=q^{n(u+1)}-b
\vspace{-0.09cm}\end{equation}
where $q^{n(u+1)}$ is the number of $q$-ary matrices with (possibly) repeated columns, and $b$ is the number of such matrices that satisfy~\eqref{condition}. It can be easily seen that, 
\vspace{-0.09cm}\begin{equation}
b\leq (u+1)c
\vspace{-0.09cm}\end{equation}
where $c$ is the number of $q$-ary matrices that do not contain a row, $\mathbf{x}$, satisfying $\lfloor\frac{x_{1}}{\eta}\rfloor>\lfloor\frac{\sum_{j=2}^{u+1}{x}_{j}}{\eta}\rfloor$. On the other hand, $c=A^n$ where $A$ is the number of ``acceptable'' $q$-ary rows of length $u+1$. Let $\mathbf{x}\in[q]^{u+1}$ denote an acceptable row. If $A_i$, $i\in\{1,2,\cdots,\lfloor \frac{q-1}{\eta} \rfloor\}$, denotes the number of acceptable rows with the first entry $x_1$ from $\{i\eta,i\eta+1,\cdots,(i+1)\eta-1\}$, then $A=\sum_iA_i$. Let $I=\lfloor \frac{q-1}{\eta}\rfloor$. If $i<I$, there are $\eta$ choices for $x_1$; if $i=I$, we have $(q-I\eta)$ choices for $x_1$. The number of ways to choose the rest of the entries (denoted by $B_i$) is
\vspace{-0.09cm}\begin{equation}
B_i=\sum_{k=0}^{i\eta-1}{k+u-1\choose u-1}={i\eta+u-1\choose u}
\vspace{-0.09cm}\end{equation}
where ${k+u-1\choose u-1}$ counts the number of non-negative integer solutions to $\sum_{j=2}^{u+1}{x_j}=k$. Consequently,
\begin{align}
A=&\  \eta\sum_{i=1}^{I-1}\!{i\eta\!+\!u\!-\!1\choose u}\!+\!(q\!-\!I\eta){I\eta\!+\!u\!-\!1\choose u}\\\nonumber
=&\  \eta{\!(I\!-\!1)\eta\!+\!u\!\choose u+1}\!-\!\eta{\eta\!+\!u\choose u\!+\!1}\!+\!(q\!-\!I\eta){\!I\eta\!+\!u\!-\!1\!\choose u}.
\end{align}
Using these results,
\vspace{-0.09cm}\begin{equation}\label{myp}
P(E_i)\leq 1-\frac{q^{n(u+1)}-(u+1)A^n}{(u+1)!{q^n\choose u+1}}\leq \frac{A^n}{u!{q^n\choose u+1}}.
\vspace{-0.09cm}\end{equation}
Note that the second inequality does not loosen the bound significantly since $1-\frac{q^{n(u+1)}-(u+1)A^n}{(u+1)!{q^n\choose u+1}}\sim  \frac{A^n}{u!{q^n\choose u+1}}$ as $n\rightarrow\infty$. 

As $n,N\rightarrow\infty$, $p'\sim \frac{(u+1)A^n}{q^{n(u+1)}}$ and $L\sim\frac{N^{u+1}}{(u+1)!}$.  Consequently,~\eqref{lovasz-high} asymptotically simplifies to 
\begin{align}\nonumber
N^{u+1}\leq \gamma^n\epsilon u!
&\Rightarrow R\leq R_{\textnormal{critical}}=\frac{\log \gamma}{u+1}+\frac{\log(\epsilon u!)}{n(u+1)}
\end{align}
where $\gamma=q^{(u+1)}/A$. This completes the proof.
\end{proof}
\vspace{10pt}
\textbf{Acknowledgments:} This work was supported by the NSF grants  CCF 0821910, CCF 0809895, and CCF 0939370, and an NSERC postgraduate scholarship.


\end{document}